\numberwithin{equation}{section}
\numberwithin{figure}{section}
\date{}
\begin{document}

\title{Dining Cryptographers are Practical}

\author{Christian Franck\qquad{}Jeroen van de Graaf}

\institute{~}
\maketitle
\begin{abstract}
The dining cryptographers protocol provides information-theoretically
secure sender and recipient untraceability. However, the protocol
is impractical because a malicious participant who disrupts the communication
is hard to detect. We propose a scheme with optimal collision resolution,
in which computationally limited disruptors are easy to detect.
\end{abstract}

\section{Introduction}

The Dining Cryptographers protocol \cite{chaum1988dcp} is a special
primitive for anonymous communication in which senders and recipients
are unconditionally untraceable. Unlike relay-based techniques like
mixing or onion routing, it requires no assumption on the network,
no cryptographic assumptions and no third party. Because of these
advantages it could be useful in many scenarios like electronic voting,
low latency anonymous communication or multiparty computation.

During a typical round of the protocol, the participants $P_{1},...,P_{n}$
respectively broadcast the ciphertexts $O_{1},...,O_{n}$. Each ciphertext
$O_{i}$ looks like a random value, but the sum of all the ciphertexts
$C=\sum_{i=1}^{n}O_{i}$ reveals an anonymous message $M$ (i.e.,
$C=M$). The sender remains unknown; that is, each participant could
be the sender of the message. The protocol typically comprises two
steps:
\begin{enumerate}
\item During the first step, each pair of participants $P_{i}$ and $P_{j}$
secretly agrees on a key $K_{ij}$. This can be represented by a key
graph like the one shown in Figure~\ref{Flo:2}(a). By definition
$K_{ji}=-K_{ij}$ and $K_{ii}=0$. 
\item During the second step, each participant $P_{i}$ computes a ciphertext
$O_{i}$ by computing the sum of his secret keys; i.e., $O_{i}=\sum_{j}K_{ij}$.
The anonymous sender additionally adds his message $M$. This is illustrated
in Figure \ref{Flo:2}(b).
\end{enumerate}
Since $K_{ij}+K_{ji}=0$, all secret keys cancel in the sum $C$,
and only the message $M$ remains. When several participants try to
send a message during the same round, the messages collide (e.g. $C=M+M'+M''$)
and no meaningful data is transmitted.

\begin{figure}[t]
\centering
\subfigure[Key graph showing which participants $P_i$ and $P_j$ share a secret key $K_{ij}$ (and its inverse  $K_{ji}:=-K_{ij}$).] { 
\begin{footnotesize}
\newcommand{\participK}[2]{
\begin{scope}[shift={(#1)}]
\draw [draw,fill=lightgray] (0pt,10pt) circle (5pt);
\draw [draw,fill=lightgray] (10pt,0pt) arc (0:180:10pt and 5pt);
\fill [lightgray] (-10pt,-10pt) rectangle (10pt,0pt);
\draw [draw] (10pt,0pt) -- (10pt,-10pt);
\draw [draw] (-10pt,0pt) -- (-10pt,-10pt);
\draw [draw] (5pt,-1pt) -- (5pt,-10pt);
\draw [draw] (-5pt,-1pt) -- (-5pt,-10pt);
\draw [anchor=center] (0pt,-2.5pt) node {#2};
\end{scope}
}
\begin{tikzpicture}
\begin{scope}
\path (18:6.9em) coordinate (P3);
\path (90:6.9em) coordinate (P2);
\path (162:6.9em) coordinate (P1);
\path (234:6.9em) coordinate (P5);
\path (306:6.9em) coordinate (P4);
\draw [thick] (P3) to node [anchor=south ,pos=0.5,swap,sloped] {$K_{23}\ $} (P2);
\draw [thick] (P3) to node [anchor=south ,pos=0.5,swap,sloped] {$K_{13}\ $} (P1);
\draw [thick] (P3) to node [anchor=north ,pos=0.5,sloped] {$K_{35}\ $} (P5);
\draw [thick] (P3) to node [anchor=north ,pos=0.5,sloped] {$K_{34}\ $} (P4);
\draw [thick] (P2) to node [anchor=south ,pos=0.5,swap,sloped] {$K_{12}\ $} (P1);
\draw [thick] (P2) to node [anchor=south ,pos=0.5,sloped] {$K_{25}\ $} (P5);
\draw [thick] (P2) to node [anchor=south ,pos=0.5,swap,sloped] {\ $K_{24}$} (P4);
\draw [thick] (P1) to node [anchor=north ,pos=0.5,sloped] {\ $K_{15}$} (P5);
\draw [thick] (P1) to node [anchor=north ,pos=0.5,swap,sloped] {\ $K_{14}$} (P4);
\draw [thick] (P5) to node [anchor=north ,pos=0.5,swap,sloped] {\ $K_{45}$} (P4);
\participK{P1}{$P_1$};
\participK{P2}{$P_2$};
\participK{P3}{$P_3$};
\participK{P4}{$P_4$};
\participK{P5}{$P_5$};
\end{scope}

\end{tikzpicture}
\end{footnotesize}
\subfigure[Ciphertexts $O_i$ are computed using the secret keys $K_{ij}$. The sender also add his message $M$.] { 
\begin{footnotesize}
\newcommand{\participK}[2]{
\begin{scope}[shift={(#1)}]
\draw [draw,fill=lightgray] (0pt,10pt) circle (5pt);
\draw [draw,fill=lightgray] (10pt,0pt) arc (0:180:10pt and 5pt);
\fill [lightgray] (-10pt,-10pt) rectangle (10pt,0pt);
\draw [draw] (10pt,0pt) -- (10pt,-10pt);
\draw [draw] (-10pt,0pt) -- (-10pt,-10pt);
\draw [draw] (5pt,-1pt) -- (5pt,-10pt);
\draw [draw] (-5pt,-1pt) -- (-5pt,-10pt);
\draw [anchor=center] (0pt,-2.5pt) node {#2};
\end{scope}
}
\begin{tikzpicture}

\begin{scope}
\path (0em,7em) coordinate (Q1);
\path (0em,3.5em) coordinate (Q2);
\path (0em,0em) coordinate (Q3);
\path (0em,-3.5em) coordinate (Q4);
\path (0em,-7em) coordinate (Q5);
\node [rectangle, right of=Q1, anchor=west, node distance=1.5em] {$O_1=K_{12}+ K_{13}+ K_{14}+ K_{15}$};
\node [rectangle, right of=Q2, anchor=west, node distance=1.5em, text width=136pt] {(sender)\\$O_2=K_{21}+ K_{23}+ K_{24}+ K_{25}+ M$};


\node [rectangle, right of=Q3, anchor=west, node distance=1.5em] {$O_3=K_{31}+ K_{32}+ K_{34}+ K_{35}$};
\node [rectangle, right of=Q4, anchor=west, node distance=1.5em] {$O_4=K_{41}+ K_{42}+ K_{43}+ K_{45}$};
\node [rectangle, right of=Q5, anchor=west, node distance=1.5em] {$O_5=K_{51}+ K_{52}+ K_{53}+ K_{54}$};
\participK{Q1}{$P_1$};
\participK{Q2}{$P_2$};
\participK{Q3}{$P_3$};
\participK{Q4}{$P_4$};
\participK{Q5}{$P_5$};
\end{scope}

\end{tikzpicture}
\end{footnotesize}

\protect\caption{Computation of ciphertexts in the dining cryptographers protocol. }

\label{Flo:2}
\end{figure}

A major problem of the protocol is that no communication can take
place if a malicious participant deliberately creates collisions all
the time. As the anonymity of the honest participants must not be
compromised, the detection of such a disruptor is difficult. While
computationally secure variation have been proposed, no efficient
and practically usable solution has been proposed for the information-theoretical
setting until today.

\subsubsection*{Related Work}

In \cite{golle2004dcr}, Golle and Juels used the Diffie-Hellman key
exchange to construct ciphertexts with an algebraic structure that
can be used in zero-knowledge proofs. More recently, Franck showed
in \cite{Franck_DC_0924} how to use such ciphertexts to detect cheating
participants in the context of collision resolution algorithms. However,
this approach does not offer the unconditional anonymity of the initial
protocol.\vspace{-2bp}

\subsubsection*{Our Contribution}

In this paper, we present a novel unconditionally untraceable dining
cryptographers scheme with optimal collision resolution, in which
computationally restricted disruptors are easy to detect. We use Pedersen
commitments to computationally bind participants to their secret keys
and then we use these commitments to prove the correct retransmission
of messages in a tree based collision resolution algorithm (We use
verifiable superposed receiving, presented by Pfitzmann in \cite{pfitzmann1990dkt}
and Waidner in \cite{waidner1990usa}). 

We believe our scheme is a significant improvement over the reservation
based technique of the initial dining cryptographers protocol \cite{chaum1988dcp},
wherein the detection of disruptors is lengthy and cumbersome. We
see possible applications in various areas like low latency anonymous
communication and electronic voting.\vspace{-2bp}

\subsubsection*{Outline of the Paper}

The rest of this paper is organized as follows. Section~2 contains
the preliminaries. In Section~3, we show how we extend the dining
cryptographers scheme with Pedersen commitments. In Section~4, we
show how the commitments can be used to construct statements that
can be used in zero-knowledge proofs. In Section~5, we show how to
implement verifiable collision resolution. Section~6 contains some
practical considerations. Section~7 is about related work, and Section~8
concluding remarks.

\pagebreak{}

\section{Preliminaries}

In this section, we discuss the assumptions and the primitives that
we use in the the rest of the paper.

\subsubsection*{Computational Assumptions}

We assume a group of $n$ participants $P_{1},...,P_{n}$ that can
be modeled by poly-time turing machines. We need a short-time computational
assumption to verify the correct execution of the protocol in zero-knowledge.
The anonymity of the transmitted data is unconditionally secure.

\subsubsection*{Communication Channels}

We assume a secure communication channel between each distinct pair
of participants $P_{i}$ and $P_{j}$, and we assume a fully connected
key graph.

We further assume a reliable synchronous broadcast channel \cite{lamport1982bgp},
which allows each participant $P_{i}\in\mathbb{P}$ to send a message
to all other participants. The recipients thus have the guarantee
that all then receive the same message, and that this is the same
unfalsified message that was send out by the sender.

\subsubsection*{Pedersen Commitments \cite{conf/crypto/Pedersen91}}

Let $G$ be a group of order $q$ in which the discrete logarithm
problem is assumed to be hard, and let $g$ and $h$ be randomly chosen
generators of a G. To commit to secret $K\in\mathbb{Z}_{q}$, the
committer choses random $r\in\mathbb{Z}_{q}$ and computes the commitment
\[
c:=g^{K}h^{r}.
\]
The committer can open the commitment by revealing $(K,r)$. Such
a commitment is \emph{unconditionally hiding}, which means that $K$
is perfectly secret until the commitment is opened. Further, such
a commitment is \emph{computationally binding}, which means that it
is computationally hard to find $(K',r')\ne(K,r)$, such that $c=g^{K'}h^{r'}$.
And finally, such commitments are \emph{homomorphic}; which means
that for $c=g^{K}h^{r}$ and $c'=g^{K'}h^{r'}$, we also have $c''=cc'=g^{K+K'}h^{r+r'}$.

\subsubsection*{Zero-Knowledge Proofs}

A zero-knowledge proofs allows a prover to convince a verifier that
he knows a witness which verifies a given statement, without revealing
the witness or giving the verifier any other information. One can
for instance construct a zero-knowledge proof to show the knowledge
of a discrete logarithm, the equality of discrete logarithms with
different bases, and logical $\wedge$ (and) and $\vee$ (or) combinations
thereof. A system for proving general statements about discrete logarithms
was presented in \cite{camenisch1997psg}. In our notation based on
\cite{camenisch1997egs}, secrets are represented by greek symbols. 
\begin{example}
A proof of knowledge of the discrete logarithm of $y$ to the base
$g$ as 
\[
\mathcal{PK}\{\alpha:y=g^{\alpha}\}.
\]
\pagebreak{}
\end{example}

\section{Extended Scheme with Pedersen Commitments}

In this section, we propose a way to extend the dining cryptographers
scheme using Pedersen commitments. We let each participant $P_{i}$,
$i\in\{1,...,n\}$ broadcast a tuple $(O_{i},c_{i})\in\mathbb{Z}_{q}\times G$
instead of just broadcasting $O_{i}$. The element $c_{i}$ is a Pedersen
commitment to the value $K_{i}$. The algebraic (discrete log based)
structure of $c_{i}$ will later allow to prove statements about $O_{i}$
in zero-knowledge. As $c_{i}$ is unconditionally hiding, the security
of the original protocol is preserved.

\subsubsection*{Detailed Description}

During the setup phase, when participants $P_{i}$ and $P_{j}$, $i\ne j$
agree on a secret key $K_{ij}\in\mathbb{Z}_{q}$, we require them
to additionally agree on a second secret value $r_{ij}\in\mathbb{Z}_{q}$.
Similarly to $K_{ji}=-K_{ij}$, we define $r_{ji}=-r_{ij}$. To simplify
the description we further define $r_{ii}=0$. The value $r_{ij}$
is then used by participant $P_{i}$ to commit to the secret key $K_{ij}$,
using the Pedersen commitment 
\[
c_{ij}:=g^{K_{ij}}h^{r_{ij}}.
\]
Note that $P_{i}$ and $P_{j}$ know the secrets $r_{ij}$ and $K_{ij}$,
so that both of them can compute and open $c_{ij}$. This knowledge
is used by $P_{j}$ to further provide $P_{i}$ with a digital signature
\[
\mathcal{S}_{j}(c_{ij}).
\]
This digital signature can later be used by $P_{i}$ to prove the
authenticity of $c_{ij}$ to a third party. Revealing $c_{ij}$ and
$\mathcal{S}_{j}(c_{ij})$ will not give away any information about
$K_{ij}$, since $c_{ij}$ is unconditionally hiding.

If participant $P_{i}$'s ciphertext $O_{i}$ does not contain a message,
we have
\[
O_{i}=K_{i}
\]
where
\[
K_{i}:=\sum_{j=1}^{n}K_{ij}.
\]
A Pedersen commitment for $K_{i}$ can be computed from $c_{i1},...,c_{in}$
according to
\[
c_{i}:=\prod_{j=1}^{n}c_{ij}.
\]
This aggregation of commitments is illustrated in Figure \ref{figcommit},
where $P_{1}$ computes the commitment $c_{1}$ for the ciphertext
$O_{1}=K_{1}$. This commitment $c_{1}$ could be opened by $P_{1}$
using $K_{1}$ and $\sum_{j=1}^{n}r_{1j}$.

\begin{figure}[t]

\begin{footnotesize}
\newcommand{\participK}[2]{
\begin{scope}[shift={(#1)}]
\draw [draw,fill=lightgray] (0pt,10pt) circle (5pt);
\draw [draw,fill=lightgray] (10pt,0pt) arc (0:180:10pt and 5pt);
\fill [lightgray] (-10pt,-10pt) rectangle (10pt,0pt);
\draw [draw] (10pt,0pt) -- (10pt,-10pt);
\draw [draw] (-10pt,0pt) -- (-10pt,-10pt);
\draw [draw] (5pt,-1pt) -- (5pt,-10pt);
\draw [draw] (-5pt,-1pt) -- (-5pt,-10pt);
\draw [anchor=center] (0pt,-2.5pt) node {#2};
\end{scope}
}
\begin{tikzpicture}
\begin{scope}
\path (18:7em) coordinate (P3);
\path (90:7em) coordinate (P2);
\path (162:7em) coordinate (P1);
\path (234:7em) coordinate (P5);
\path (306:7em) coordinate (P4);
\draw [dotted] (P3) to node [dotted,anchor=south ,pos=0.5,swap,sloped] {} (P2);
\draw [thick] (P3) to node [anchor=south ,pos=0.5,swap,sloped] {$\overbrace{K_{13},r_{13}}^{\displaystyle c_{13}}$} (P1);
\draw [dotted] (P3) to node [dotted,anchor=north ,pos=0.5,sloped] {} (P5);
\draw [dotted] (P3) to node [dotted,anchor=north ,pos=0.5,sloped] {} (P4);
\draw [thick] (P2) to node [anchor=south ,pos=0.5,swap,sloped] {$\overbrace{K_{12},r_{12}}^{\displaystyle c_{12}}$} (P1);
\draw [dotted] (P2) to node [dotted,anchor=south ,pos=0.5,sloped] {} (P5);
\draw [dotted] (P2) to node [dotted,anchor=south ,pos=0.5,swap,sloped] {} (P4);
\draw [thick] (P1) to node [anchor=north ,pos=0.5,sloped,swap] {$\underbrace{K_{15},r_{15}}_{\displaystyle c_{15}}$} (P5);
\draw [thick] (P1) to node [anchor=south ,pos=0.5,sloped] {$\overbrace{K_{14},r_{14}}^{\displaystyle c_{14}}$} (P4);
\draw [dotted] (P5) to node [dotted,anchor=north ,pos=0.5,swap,sloped] {} (P4);
\participK{P1}{$P_1$};
\participK{P2}{$P_2$};
\participK{P3}{$P_3$};
\participK{P4}{$P_4$};
\participK{P5}{$P_5$};
\end{scope}

\begin{scope}[xshift=105pt]
\path (0em,7em) coordinate (Q1);
\path (0em,3.5em) coordinate (Q2);
\path (0em,0em) coordinate (Q3);
\path (0em,-3.5em) coordinate (Q4);
\path (0em,-7em) coordinate (Q5);
\node [rectangle, right of=Q1, anchor=west, node distance=1.5em] {$O_1=\underbrace{K_{12}+ K_{13}+ K_{14}+ K_{15}}_{\displaystyle c_1=c_{12}\cdot c_{13}\cdot c_{14}\cdot c_{15}}$};


\participK{Q1}{$P_1$};
\end{scope}

\end{tikzpicture}
\end{footnotesize}

\protect\caption{Example: Binding to the secret keys using shared Pedersen commitments.
Participant $P_{1}$ agrees on the secret keys $K_{12},K_{13},K_{14},K_{15}\in\mathbb{Z}_{q}$
and the secret values $r_{12},r_{13},r_{14},r_{15}\in\mathbb{Z}_{q}$
respectively with the participants $P_{2},P_{3},P_{4}$ and $P_{5}$.
This allows him to compute the commitments $c_{12},c_{13},c_{14},c_{15}\in G$
with $c_{ij}=g^{K_{ij}}h^{r_{ij}}$. Finally, he computes $c_{1}=\prod_{j=2}^{5}c_{1j}$,
the commitment for $K_{1}:=\sum_{j=2}^{5}K_{1j}$.}
\label{figcommit}
\end{figure}

\medskip{}

During the broadcast phase, the participants $P_{1},...,P_{n}$ respectively
send the tuples $(O_{1},c_{1}),...,(O_{n},c_{n})$. The commitments
$c_{1},...,c_{n}$ are valid if 
\begin{equation}
\prod_{i=1}^{n}c_{i}=1.\label{eq:prop1}
\end{equation}
If (\ref{eq:prop1}) does not hold, this means that at least one participant
cheated. To find the cheater(s) an investigation phase can be performed.\medskip{}

During such an investigation phase, each participant $P_{i}$ will
publish the secret keys $c_{ij}$ and the corresponding signatures
$\mathcal{S}_{j}(c_{ij})$ for $j\in\{1,...,n\}\backslash\{i\}$.
The signatures have to be correct, and it must hold that
\begin{equation}
c_{i}=\prod_{j=1}^{n}c_{ij}\label{eq:a}
\end{equation}
and
\begin{equation}
c_{ij}c_{ji}=1.\label{eq:b}
\end{equation}
If a signature is wrong or if (\ref{eq:a}) or (\ref{eq:b}) does
not hold, then the corresponding participant $P_{i}$ cheated. The
fact that (\ref{eq:prop1}) and (\ref{eq:b}) must hold is because
we have $K_{ij}=-K_{ji}$ and $r_{ij}=-r_{ji}$ by construction, and
(\ref{eq:a}) must hold by definition.

\section{Statements for Zero-Knowledge Proofs}

In this section, we propose statements that can be used in zero-knowledge
proofs.

\subsubsection*{Statements about Single Rounds}

During a single round of the dining cryptographers protocol, a participant
$P_{i}$ broadcasts a ciphertext $(O_{i},c_{i})$. Either we have
$O_{i}=K_{i}$ or $O_{i}=K_{i}+M$. A statement that holds when $O_{i}$
does not encode a message $M$ is given in Theorem~\ref{thm:1}.
The proof is given in Appendix~\ref{appenix:Proofs}.
\begin{theorem}
\label{thm:1}If a poly-time participant $P_{i}$ generates the tuple
$(O_{i},c_{i})$ and $P_{i}$ knows $\alpha$ such that $c_{i}=g^{O_{i}}h^{\alpha}$,
then we have $O_{i}=K_{i}$.\pagebreak{}
\end{theorem}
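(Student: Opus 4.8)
The plan is to obtain $O_i=K_i$ from nothing more than the \emph{computationally binding} property of the Pedersen commitment $c_i$, exploiting the fact that a poly-time $P_i$ who produced $(O_i,c_i)$ necessarily holds a second opening of $c_i$ coming from the key-agreement phase.

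First I would exhibit that second, ``canonical'' opening. In the setup phase $P_i$ himself computes each commitment $c_{ij}=g^{K_{ij}}h^{r_{ij}}$ and hence knows every pair $(K_{ij},r_{ij})$; so $P_i$ can form $K_i=\sum_{j=1}^{n}K_{ij}$ together with $R_i:=\sum_{j=1}^{n}r_{ij}$. Since $c_i$ is by construction $\prod_{j=1}^{n}c_{ij}$ — the very relation~(\ref{eq:a}) that the investigation phase checks — the homomorphic property of Pedersen commitments yields $c_i=g^{K_i}h^{R_i}$. Thus $(K_i,R_i)$ is an opening of $c_i$ that $P_i$ can compute.

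Next I would combine this with the hypothesis that $P_i$ also knows $\alpha$ with $c_i=g^{O_i}h^{\alpha}$. Equating the two representations gives $g^{\,O_i-K_i}=h^{\,R_i-\alpha}$ in $G$. If $O_i\neq K_i$, this is a nontrivial relation between $g$ and $h$: since $q$ is prime, $O_i-K_i$ is invertible in $\mathbb{Z}_q$, so $P_i$ can output $\log_h g=(R_i-\alpha)(O_i-K_i)^{-1}\bmod q$ — a poly-time break of the discrete logarithm assumption of Section~2 (equivalently, $(O_i,\alpha)$ and $(K_i,R_i)$ would be two distinct openings of $c_i$, contradicting its binding property). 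Hence $O_i=K_i$, which is exactly the assertion that $O_i$ encodes no message.

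I expect the only delicate point to be arguing that $P_i$ genuinely knows the canonical opening $(K_i,R_i)$ and not merely the pair $(O_i,\alpha)$; this is where the structure of the setup phase matters, since both $K_{ij}$ and $r_{ij}$ are held jointly and the well-formedness of $c_i=\prod_j c_{ij}$ is pinned down by~(\ref{eq:a})--(\ref{eq:b}). Two further remarks belong here: the \emph{poly-time} restriction on $P_i$ is essential, because binding is only computational and an unbounded $P_i$ could open $c_i$ to $O_i=K_i+M$ while still ``knowing'' a matching $\alpha$; and in the full protocol the hypothesis ``$P_i$ knows $\alpha$ such that $c_i=g^{O_i}h^{\alpha}$'' is itself furnished by having $P_i$ carry out the proof $\mathcal{PK}\{\alpha: c_i=g^{O_i}h^{\alpha}\}$ and invoking its knowledge extractor.
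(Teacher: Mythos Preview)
Your argument is correct and matches the paper's proof essentially line for line: the paper records the canonical opening $c_i=g^{K_i}h^{r_i}$ ``by definition,'' then appeals to a separate lemma stating that a poly-time party cannot produce two representations $g^{a}h^{b}=g^{a'}h^{b'}$ with $a\neq a'$ (proved via the same $\log_h g=(b'-b)/(a-a')$ computation you give), whence $O_i=K_i$. Your write-up is simply more explicit about where $(K_i,R_i)$ comes from and why $P_i$ actually possesses it.
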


\subsubsection*{Statements about Multiple Rounds}

In order to discuss multiple rounds, we use a superscript $^{(k)}$
to denote a value of a round~$k$. E.g., the values $O_{i}^{(1)}$,
$O_{i}^{(2)}$ and $O_{i}^{(3)}$ denote the ciphertexts broadcasted
by $P_{i}$ during the rounds 1, 2 and 3 respectively. 

Theorem~\ref{thm:If-a-poly-time} provides a statement that holds
when ciphertexts of two rounds encode the same message. 
\begin{theorem}
\label{thm:If-a-poly-time}If a poly-time participant $P_{i}$ generates
the tuples $(O_{i}^{(1)},c_{i}^{(1)})$ and $(O_{i}^{(2)},c_{i}^{(2)})$,
and $P_{i}$ knows $\alpha$ such that $c_{i}^{(1)}(c_{i}^{(2)})^{-1}=g^{O_{i}^{(1)}-O_{i}^{(2)}}h^{\alpha}$,
then $O_{i}^{(1)}$ and $O_{i}^{(2)}$ encode the same message. 
\end{theorem}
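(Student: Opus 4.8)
The plan is to mirror the argument behind Theorem~\ref{thm:1}, using the homomorphic property of Pedersen commitments so that the \emph{ratio} $c_i^{(1)}(c_i^{(2)})^{-1}$ of the two round commitments can be treated like a single commitment. First I would fix notation for ``encode the same message'': writing $m_i^{(k)} := O_i^{(k)} - K_i^{(k)}$ for $P_i$'s own contribution in round~$k$ (which is $0$ when $P_i$ stays silent and equals $P_i$'s message otherwise), the goal is to show $m_i^{(1)} = m_i^{(2)}$, i.e. $O_i^{(1)} - O_i^{(2)} = K_i^{(1)} - K_i^{(2)}$.

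Next I would record what the poly-time $P_i$ necessarily knows. Exactly as in the proof of Theorem~\ref{thm:1}, since $P_i$ forms each $c_i^{(k)}$ as $\prod_{j=1}^n c_{ij}^{(k)}$ from the pairwise secrets $(K_{ij}^{(k)},r_{ij}^{(k)})$, it holds $K_i^{(k)}=\sum_j K_{ij}^{(k)}$ and $R_i^{(k)}=\sum_j r_{ij}^{(k)}$ with $c_i^{(k)}=g^{K_i^{(k)}}h^{R_i^{(k)}}$. The homomorphic property then gives
\[
c_i^{(1)}\bigl(c_i^{(2)}\bigr)^{-1}=g^{\,K_i^{(1)}-K_i^{(2)}}\,h^{\,R_i^{(1)}-R_i^{(2)}},
\]
so $P_i$ knows the opening $\bigl(K_i^{(1)}-K_i^{(2)},\,R_i^{(1)}-R_i^{(2)}\bigr)$ of the group element $c_i^{(1)}(c_i^{(2)})^{-1}$. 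By hypothesis $P_i$ also knows the opening $\bigl(O_i^{(1)}-O_i^{(2)},\,\alpha\bigr)$ of the \emph{same} element. If these two openings differed, $P_i$ would have broken the computational binding of this Pedersen commitment; concretely, equating the representations yields $g^{\,(O_i^{(1)}-O_i^{(2)})-(K_i^{(1)}-K_i^{(2)})}=h^{\,(R_i^{(1)}-R_i^{(2)})-\alpha}$, and a nonzero $g$-exponent (invertible in $\mathbb{Z}_q$, $q$ prime) would let the poly-time $P_i$ compute $\log_h g$, contradicting the discrete-logarithm assumption. Hence $O_i^{(1)}-O_i^{(2)}=K_i^{(1)}-K_i^{(2)}$, which is the claim.

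The key-sum bookkeeping and the binding-to-discrete-log reduction are routine. The point that needs care is the same as in Theorem~\ref{thm:1}: one must be entitled to assert that $c_i^{(k)}=g^{K_i^{(k)}}h^{R_i^{(k)}}$ with \emph{both} exponents known to $P_i$. This is exactly what the hypothesis ``$P_i$ generates the tuples'' buys us, together with the Section~3 setup in which $P_i$ holds all the $c_{ij}^{(k)}$ and their openings; without it one would first have to invoke the investigation-phase checks (\ref{eq:a})--(\ref{eq:b}) to force this form. I would also remark that the degenerate case $c_i^{(1)}(c_i^{(2)})^{-1}=1$ needs no separate treatment, since exhibiting any opening other than $(0,0)$ already contradicts binding.
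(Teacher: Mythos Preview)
Your argument is correct and follows the same route as the paper's proof: write $c_i^{(k)}=g^{K_i^{(k)}}h^{r_i^{(k)}}$, use the homomorphism to get $c_i^{(1)}(c_i^{(2)})^{-1}=g^{K_i^{(1)}-K_i^{(2)}}h^{r_i^{(1)}-r_i^{(2)}}$, equate with the hypothesised representation, and invoke the binding property (the paper packages this as Lemma~\ref{lem:Pedersen1-1} rather than spelling out the $\log_h g$ reduction inline) to conclude $O_i^{(1)}-O_i^{(2)}=K_i^{(1)}-K_i^{(2)}$. Your extra remarks about what ``$P_i$ generates the tuples'' buys and the degenerate case are sound but not needed for the paper's level of detail.
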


Theorem~\ref{thm:If-a-poly-time-1} provides a statement that holds
when a message encoded in a first ciphertext is encoded at most once
in a series of other ciphertexts (while the rest of the ciphertexts
does not encode a message).
\begin{theorem}
\label{thm:If-a-poly-time-1}If a poly-time participant $P_{i}$ generates
 $(O_{i}^{(1)},c_{i}^{(1)}),...,(O_{i}^{(l)},c_{i}^{(l)})$, and $P_{i}$
knows $\alpha$ such that
\begin{equation}
\bigwedge_{k=2}^{l}\left(\left(c_{i}^{(1)}\prod_{j=2}^{k}(c_{i}^{(j)})^{-1}=g^{O_{i}^{(1)}-\sum_{j=2}^{k}O_{i}^{(j)}}h^{\alpha}\right)\vee\left(c=g^{O_{i}^{(k)}}h^{\alpha}\right)\right)\label{eq:theorem}
\end{equation}
then at most one ciphertext of $O_{i}^{(2)},...,O_{i}^{(l)}$ encodes
the same message as $O_{i}^{(1)}$, while the other ciphertexts of
$O_{i}^{(2)},...,O_{i}^{(l)}$ encode no message.
\end{theorem}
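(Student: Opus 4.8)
The plan is to mimic the proofs of Theorems~\ref{thm:1} and~\ref{thm:If-a-poly-time}: first collapse the group identities in~(\ref{eq:theorem}) into linear relations over $\mathbb{Z}_q$ using the structure of the commitments and hardness of discrete log, and then finish with a short induction on the round index.

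\textbf{Step 1 (reduction to $\mathbb{Z}_q$).} As in Section~3, the well-formed commitment of round~$j$ is $c_i^{(j)}=\prod_{m=1}^{n}c_{im}^{(j)}=g^{K_i^{(j)}}h^{R_i^{(j)}}$ with $K_i^{(j)}:=\sum_{m=1}^{n}K_{im}^{(j)}$ and $R_i^{(j)}:=\sum_{m=1}^{n}r_{im}^{(j)}$, both known to $P_i$; the investigation-phase checks~(\ref{eq:a}),~(\ref{eq:b}) together with a discrete-log reduction guarantee that, except with negligible probability, the $c_i^{(j)}$ a poly-time $P_i$ broadcasts and can later defend have exactly this form. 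Set $M_i^{(j)}:=O_i^{(j)}-K_i^{(j)}$, the message carried in round~$j$ (so ``encodes no message'' means $M_i^{(j)}=0$ and ``encodes the same message as $O_i^{(1)}$'' means $M_i^{(j)}=M_i^{(1)}$). For a fixed $k$, if $P_i$ knows $\alpha$ with $c_i^{(k)}=g^{O_i^{(k)}}h^{\alpha}$, then, exactly as in Theorem~\ref{thm:1}, comparing with $c_i^{(k)}=g^{K_i^{(k)}}h^{R_i^{(k)}}$ and using hardness of discrete log forces $O_i^{(k)}=K_i^{(k)}$, i.e.\ $M_i^{(k)}=0$. If instead $P_i$ knows $\alpha$ with $c_i^{(1)}\prod_{j=2}^{k}(c_i^{(j)})^{-1}=g^{O_i^{(1)}-\sum_{j=2}^{k}O_i^{(j)}}h^{\alpha}$, then substituting the representations of the $c_i^{(j)}$ and using hardness of discrete log (the multi-round analogue of the Theorem~\ref{thm:If-a-poly-time} argument) forces $K_i^{(1)}-\sum_{j=2}^{k}K_i^{(j)}=O_i^{(1)}-\sum_{j=2}^{k}O_i^{(j)}$, i.e.\ $\sum_{j=2}^{k}M_i^{(j)}=M_i^{(1)}$. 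Hence, except with negligible probability, (\ref{eq:theorem}) yields: for every $k\in\{2,\dots,l\}$, either $M_i^{(k)}=0$ or $\sum_{j=2}^{k}M_i^{(j)}=M_i^{(1)}$.

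\textbf{Step 2 (induction).} Put $S_1:=0$ and $S_k:=\sum_{j=2}^{k}M_i^{(j)}$, so $M_i^{(k)}=S_k-S_{k-1}$ and the two alternatives read $S_k=S_{k-1}$ or $S_k=M_i^{(1)}$. By induction on $k$ one gets $S_k\in\{0,M_i^{(1)}\}$ for all $k$, and moreover $S_{k-1}=M_i^{(1)}$ forces $S_k=M_i^{(1)}$ (in both alternatives $S_k$ is either $S_{k-1}$ or $M_i^{(1)}$). So $(S_1,\dots,S_l)$ consists of a block of $0$'s followed by a block of $M_i^{(1)}$'s; there is at most one index $k^{*}$ with $S_{k^{*}-1}=0$ and $S_{k^{*}}=M_i^{(1)}$, and for it $M_i^{(k^{*})}=M_i^{(1)}$, while every other $j\in\{2,\dots,l\}$ has $M_i^{(j)}=S_j-S_{j-1}=0$. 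Translating back gives the claim. (If $M_i^{(1)}=0$ this degenerates to $M_i^{(j)}=0$ for all $j\ge 2$, which still matches the statement.)

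\textbf{Main obstacle.} Step~2 is elementary; the delicate part is Step~1, namely (i) pinning a poly-time $P_i$ to the well-formed commitments $c_i^{(j)}=g^{K_i^{(j)}}h^{R_i^{(j)}}$ — this is exactly where the signed shares $c_{im}^{(j)}$, the checks~(\ref{eq:a})--(\ref{eq:b}) and a reduction to discrete log are needed, and the reason the conclusion holds only up to negligible error — and (ii) the role of the single witness $\alpha$ in~(\ref{eq:theorem}): I would read it as a separate discrete-log witness for whichever disjunct of clause~$k$ is invoked, which is all Step~1 uses, and the literal shared reading only strengthens the hypothesis, so nothing is lost. (Incidentally, the second disjunct of~(\ref{eq:theorem}) should presumably read $c_i^{(k)}=g^{O_i^{(k)}}h^{\alpha}$.)
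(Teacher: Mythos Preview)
Your proof is correct and follows essentially the same route as the paper's: first use the commitment structure together with the discrete-log binding (the paper's Lemma~\ref{lem:Pedersen1-1}) to collapse each clause of~(\ref{eq:theorem}) to the linear alternative ``$M_i^{(k)}=0$ or $\sum_{j=2}^{k}M_i^{(j)}=M_i^{(1)}$'', and then run the obvious induction on $k$. Your partial-sum formulation $S_k$ and the remark on the typo $c\to c_i^{(k)}$ are tidy additions, but the argument is the same as the paper's.
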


The statements from the preceding theorems can be used in zero-knowledge
proofs. We will see in the next section how we can use this for proving
the correct execution of a collision resolution algorithm.

\section{Implementing Verifiable Superposed Receiving }

Superposed receiving is a collision resolution scheme for the Dining
Cryptographers protocol proposed by Pfitzmann in \cite{pfitzmann1990dkt}
and Waidner in \cite{waidner1990usa}. It achieves an optimal throughput
of one message per round. However, the scheme was never used in practice,
as a malicious participant may disrupt the process and remain undetected.
In this section, we show that in our scheme such disruptors are easy
to detect.
\begin{figure}[t]
\begin{centering}
\begin{tikzpicture}[level distance=1.0cm,
sibling distance=5cm,level/.style={sibling distance=1.2cm/#1},
block/.style ={rectangle, draw=black, thick, text centered,  inner sep=0.15cm,text height=1.8ex,minimum width=5ex,font=\small},
blockx/.style ={rectangle, fill=lightgray, draw=black, thick, text centered,  inner sep=0.15cm,font=\small}]

\begin{scope}
\node[block] (a1){}
child {node (a2)[block] {}}
child {node (a3)[block,dashed,yshift=-1cm] {}};
\end{scope}

\begin{scope}[xshift=3.5cm]
\node[block] (b1){$M$}
child {node (b2)[block] {$M$}}
child {node (b3)[block,dashed,yshift=-1cm] {}};
\end{scope}

\begin{scope}[xshift=7cm]
\node[block] (c1){$M$}
child {node (c2)[block] {}}
child {node (c3)[block,dashed,yshift=-1cm] {$M$}};
\end{scope}

\node[inner sep=0cm,fit=(a1) (a2) (a3)] (la) {};
\node at (la.south)[below=0.3cm,inner sep=0,font=\small] {(a) No message.};
\node[inner sep=0cm,fit=(b1) (b2) (b3)] (lb) {};
\node at (lb.south)[below=0.3cm,inner sep=0,font=\small, text width=3cm] {(b) Retransmit in round $2k$.};
\node[inner sep=0cm,fit=(c1) (c2) (c3)] (lc) {};
\node at (lc.south)[below=0.3cm,inner sep=0,font=\small, text width=3cm] {(c) 'Retransmit' in round $2k+1$.};

\draw [densely dotted]($(a1)+ (-3.0cm,0.5cm)$) node [above,xshift=0.7cm]{round id}-- +(12.1cm,0);
\draw [densely dotted]($(a1)+ (-3.0cm,-0.5cm)$) node [above,xshift=0.7cm]{$k$}-- +(12.1cm,0);
\draw [densely dotted]($(a1)+ (-3.0cm,-1.5cm)$) node [above,xshift=0.7cm]{$2k$}-- +(12.1cm,0);
\draw [densely dotted]($(a1)+ (-3.0cm,-2.5cm)$) node [above,xshift=0.7cm]{$2k+1$}-- +(12.1cm,0);

\end{tikzpicture}

\par\end{centering}

\protect\caption{Retransmission in superposed receiving. Only message involved in the
in the collision in round $k$ may be retransmitted either in round
$2k$. No new message may be sent during the collision resolution
process.}
\label{figure1}
\end{figure}

\subsubsection*{Superposed Receiving}

A collision occurs when multiple participants send a message in the
same round. In superposed receiving, collisions are repeatedly split
in two, until all messages are transmitted. An exemplary collision
resolution tree is shown in Figure~\ref{supreceiving_tree-1}. To
keep our description simple, we assume that when a collision occurs
in a round $k$, the rounds $2k$ and $2k+1$ are used to split this
collision. Like in the previous section, we use subscripts to denote
values of the different rounds, e.g. $O^{(7)}$ for ciphertext of
round 7.

In superposed receiving messages are tuples of the form $(1,m)$.
It is then possible to determine the number of messages involved in
a collision and to compute the average value of a message involved
in the collision. For instance, when 3 messages $(1,m)$, $(1,m')$
and $(1,m'')$ collide in round $k$, the tuple $(3,m+m'+m'')$ is
received and the average value is then $(m+m'+m'')/3$. Then, in round
$2k$ only the messages with a value below this average are retransmitted,
while the rest of the messages goes to round $2k+1$. To keep our
description simple, we assume that a tuple of the form $(1,m)$ is
encoded in a message $M\in\mathbb{Z}_{q}$, such that the individual
elements of the tuple are added when there is a collision.

It is not necessary to transmit anything in round $2k+1$. Instead,
the result of round $2k+1$ is inferred by subtracting the results
of round $2k$ from the result of round $k$. This technique, which
is also known as inference cancellation \cite{yu2005sicta}, is the
reason for the optimal throughput of the scheme. In the example of
Figure~\ref{supreceiving_tree-1} only 5 rounds are transmitted for
5 messages. For inference cancellation to work, the collision resolution
must operate in blocked access mode. This means that no new message
may be sent by any participant until the collision resolution process
is over.

\subsubsection*{Verification}

A malicious participant may try to disrupt the collision resolution
process by not properly participating in the collision resolution
process. We verify the correct execution of the protocol in two steps.

First, we verify in round $2k$ that, according to Figure~\ref{figure1},
a participant either retransmits exactly the same message that he
sent in round $k$, or that he sends no message at all. Using the
statements from the previous section, each participant can prove that
his ciphertext $O_{2k}$ is correct, without revealing if whether
it contains a message or not. To do this, the participant generates
a zero-knowledge proof that proves that he knows $\alpha$ such that
\begin{equation}
\mathcal{PK}\{\alpha:(c_{i}^{(2k)}=g^{O_{i}^{(2k)}}h^{\alpha})\vee(c_{i}^{(k)}c_{i}^{(2k)-1}=g^{O_{i}^{(k)}-O_{i}^{(2k)}}h^{\alpha})\}\label{eq:33}
\end{equation}
holds. With this proof he can convince a verifier that he participated
correctly, without compromising the anonymity of the protocol. As
described before, in some rounds no transmission takes place and so
there might no be a value $O_{i}^{(k)}$ available to prove the correctness
of $O_{i}^{(2k)}$ using  statement (\ref{eq:33}). However, it is
still possible to prove that $O_{i}^{(2k)}$ is correct by proving
that a message contained in the nearest transmitted parent round is
transmitted at most once in all the branches down to $O_{i}^{(2k)}$.
This can be done using Theorem~\ref{thm:If-a-poly-time-1}. 

\begin{figure}
\begin{centering}
\footnotesize
\pgfdeclarelayer{background layer}
\pgfdeclarelayer{foreground layer}
\pgfsetlayers{background layer,main,foreground layer}
\begin{tikzpicture}[yscale=0.8]
\begin{scope}[level distance=2.0cm,
sibling distance=4cm,level/.style={sibling distance=3.8cm/#1},
edge from parent/.style={draw,very thick},
block/.style ={rectangle, draw=black, thick, text centered,  inner sep=0.12cm,font=\footnotesize},
blockx/.style ={rectangle, dashed, draw=black, very thick, text centered,  inner sep=0.12cm,font=\footnotesize}]
\begin{pgfonlayer}{foreground layer}
\node[block] (n1){$\underbrace{M_1+M_2+M_3+M_4+M_5}_{\displaystyle (5,130)}$}
child {node (n21)[block] {$\underbrace{M_2+M_4}_{\displaystyle (2,28)}$} 
child {node (n2x)[block,yshift=-1.6cm] {$\underbrace{M_2}_{\displaystyle (1,11)}$} edge from parent node[fill=white,anchor=east,xshift=-0.05cm] {$<14 $}}
child {node (n2a)[blockx,yshift=-3.2cm] {$\underbrace{M_4}_{\displaystyle (1,17)}$}}
edge from parent node[fill=white,anchor=east,xshift=-0.25cm] {$<26$}}
child {node (n22)[blockx,yshift=-1.6cm] {$\underbrace{M_1+M_3+M_5}_{\displaystyle (3,102)}$}
child {node (n31)[block,yshift=-3.2cm] {$\underbrace{M_3}_{\displaystyle (1,28)}$}edge from parent node[fill=white,anchor=east,xshift=-0.05cm] {$<34 $~}}
child {node (n32)[blockx,yshift=-4.8cm] {$\underbrace{M_1+M_5}_{\displaystyle (2,74)}$}
child {node (n3f)[block] {$\underbrace{M_1}_{\displaystyle (1,36)}$}edge from parent node[fill=white,anchor=east,xshift=-0.1cm] {$<37 $~}}
child {node (n3g)[blockx,yshift=-1.6cm] {$\underbrace{M_5}_{\displaystyle (1,38)}$}}}
};
\end{pgfonlayer}



\path []($(n1)+ (-5.0cm,1.0cm)$) -- node[above]{$C^{(k)}$} +(10cm,0);

\begin{pgfonlayer}{background layer}
\draw [densely dotted]($(n1)+ (-5.0cm,1cm)$) node [above,xshift=0.8cm]{round id $k$}-- +(12.1cm,0);
\draw [densely dotted]($(n1)+ (-5.0cm,-1cm)$) node [above,xshift=0.8cm]{$1$}-- +(12.1cm,0) node[above,anchor=south west,xshift=-2.5cm]{$C^{(1)}=\sum^n_{i=1}O^{(k)}_1$};
\draw [densely dotted]($(n1)+ (-5.0cm,-3cm)$) node [above,xshift=0.8cm]{$2$}-- +(12.1cm,0) node[above,anchor=south west,xshift=-2.5cm]{$C^{(2)}=\sum^n_{i=1}O^{(k)}_2$};
\draw [densely dotted]($(n1)+ (-5.0cm,-5cm)$) node [above,xshift=0.8cm]{$3$}-- +(12.1cm,0) node[above,anchor=south west,xshift=-2.5cm]{$C^{(3)}=C^{(1)}-C^{(2)}$};
\draw [densely dotted]($(n1)+ (-5.0cm,-7cm)$) node [above,xshift=0.8cm]{$4$}-- +(12.1cm,0) node[above,anchor=south west,xshift=-2.5cm]{$C^{(4)}=\sum^n_{i=1}O^{(k)}_4$};
\draw [densely dotted]($(n1)+ (-5.0cm,-9cm)$) node [above,xshift=0.8cm]{$5$}-- +(12.1cm,0) node[above,anchor=south west,xshift=-2.5cm]{$C^{(5)}=C^{(2)}-C^{(4)}$};
\draw [densely dotted]($(n1)+ (-5.0cm,-11cm)$) node [above,xshift=0.8cm]{$6$}-- +(12.1cm,0) node[above,anchor=south west,xshift=-2.5cm]{$C^{(6)}=\sum^n_{i=1}O^{(k)}_6$};
\draw [densely dotted]($(n1)+ (-5.0cm,-13cm)$) node [above,xshift=0.8cm]{$7$}-- +(12.1cm,0) node[above,anchor=south west,xshift=-2.5cm]{$C^{(7)}=C^{(3)}-C^{(6)}$};
\draw [densely dotted]($(n1)+ (-5.0cm,-15cm)$) node [above,xshift=0.8cm]{$14$}-- +(12.1cm,0) node[above,anchor=south west,xshift=-2.5cm]{$C^{(14)}=\sum^n_{i=1}O^{(k)}_{14}$};
\draw [densely dotted]($(n1)+ (-5.0cm,-17cm)$) node [above,xshift=0.8cm]{$15$}-- +(12.1cm,0) node[above,anchor=south west,xshift=-2.5cm]{$C^{(15)}=C^{(7)}-C^{(14)}$};
\end{pgfonlayer}

\node[inner sep=0cm,fit=(n1) (n3g) (n2x)] (la) {};
\end{scope}

\end{tikzpicture}
\par\end{centering}

\protect\caption{Exemplary binary collision resolution tree with superposed receiving.
In rounds 1,2,4,6 and 14, ciphertexts $O^{(k)}$ are transmitted,
and $C^{(k)}$ is computed using these ciphertexts. In rounds 3,5,7
and 15, no data is transmitted and $C^{(k)}$ is computed using data
from the parent and the sibling node.}
\label{supreceiving_tree-1}
\end{figure}

\begin{example}
In the collision resolution process shown in Figure \ref{supreceiving_tree-1},
a participant proves for $O^{(2)}$ that 
\[
\mathcal{PK}\{\alpha:(c_{i}^{(2)}=g^{O_{i}^{(2)}}h^{\alpha})\vee(c_{i}^{(1)}c_{i}^{(2)-1}=g^{O_{i}^{(1)}-O_{i}^{(2)}}h^{\alpha})\}
\]
holds, then for $O^{(4)}$ that
\[
\mathcal{PK}\{\alpha:(c_{i}^{(4)}=g^{O_{i}^{(4)}}h^{\alpha})\vee(c_{i}^{(2)}c_{i}^{(4)-1}=g^{O_{i}^{(2)}-O_{i}^{(4)}}h^{\alpha})\}
\]
holds, then for $O^{(6)}$ that
\[
\mathcal{PK}\{\alpha:(c_{i}^{(6)}=g^{O_{i}^{(6)}}h^{\alpha})\vee(c_{i}^{(1)}c_{i}^{(2)-1}c_{i}^{(6)-1}=g^{O_{i}^{(1)}-O_{i}^{(2)}-O_{i}^{(6)}}h^{\alpha})\}
\]
holds, then for $O^{(14)}$ that
\[
\mathcal{PK}\{\alpha:(c_{i}^{(14)}=g^{O_{i}^{(14)}}h^{\alpha})\vee(c_{i}^{(1)}c_{i}^{(2)-1}c_{i}^{(6)-1}c_{i}^{(14)-1}=g^{O_{i}^{(1)}-O_{i}^{(2)}-O_{i}^{(6)}-O_{i}^{(14)}}h^{\alpha})\}
\]
holds. 
\end{example}

This shows that it is possible to verify that a participant retransmitted
his message in only one branch of the tree.

Then, we verify that every properly collision splits into 2 parts.
As we know that every collision is supposed to split, we know if all
messages end up in the same branch then at least one participant cheated.
So this can only happen when a malicious node retransmits message
in the wrong branch, or when the message is not of the correct form
$(1,m)$ initially. If such activity is detected, it is possible to
identify the disruptor by falling fall back to probabilistic splitting
of collisions \cite{Franck_DC_0924,yu2005sicta}. Each participant
then choses randomly whether to retransmit his message in round $2j$
or round $2j+1$. This allows to separate the honest nodes from the
malicious ones after a few rounds. After this separation has taken
place it is possible to determine the messages that have not been
transmitted in the right branch earlier, and to identify the corresponding
participants using a zero-knowledge proof (I.e. each participant has
to prove in zero-knowledge that he did not send the message that appeared
in the wrong branch.). If a collision repeatedly does not split, even
with probabilistic retransmission, then the involved participants
can be considered to be malicious. 

A disruptor will thus always be detected and can be banned from the
group of participants.

\subsubsection{}

\section{Practical Considerations}

This section contains a few remarks about various aspects of a practical
implementation.

\paragraph*{(Signatures with Merkle Trees)}

During the setup, participants mutually authenticate their commitments
$c_{ij}$ and $c_{ji}$ using the signatures $\mathcal{S}_{j}(c_{ij})$
and $\mathcal{S}_{i}(c_{ji})$. For many rounds, this can be implemented
more efficiently using a Merkle tree.

\paragraph*{(Mutual Signatures Attack)}

During the setup phase one participant could refuse to agree an a
shared secret with another participant. I.e., one participant could
refuse to provide a signature $\mathcal{S}(.)$ to the other participant.
As a response to this, we suggest that each participant may just publicly
claim that he is not sharing a secret key with the other participant.
It is then assumed the corresponding $K_{ij}=0$ and $c_{ij}=1$,
and no signature is required.

\paragraph*{(Efficient Investigation in Packet-Switched Networks)}

In order to efficiently detect disruptors, a single (trusted) investigator
can collect and verify the proofs. Only when he detects a cheater,
he will provide all other participants with a copy of the relevant
data.

\paragraph*{(Long Messages)}

To keep the description simple, we assumed that messages fit in a
single element of $\mathbb{Z}_{q}$; i.e., that a single $K$ is sufficient
for one round. For longer messages one can, as shown in \cite{PedersenExtended},
use a randomly chosen generator tuple $g,g',g'',...,h$ to commit
to a vector $(K,K',K'',...)$ by computing 
\[
c=g^{K}g'^{K'}g'^{K''}...h^{r}.
\]

\paragraph*{(Key Establishment)}

To obtain information-theoretical security from the protocol, it is
necessary to use real random secret keys for each round. In practice,
it is also possible to realize weaker system, where the shared secrets
are generated for instance using the Diffie-Hellman protocol.

\section{Concluding Remarks}

We have shown how to extend the dining cryptographers scheme with
Pedersen commitments, such that it is possible to construct zero-knowledge
proofs about the retransmission of data, without compromising the
anonimity of the protocol.

It is remarkable that it is then possible realize a verifiable dining
cryptographers protocol with an optimal throughput, which does not
require any kind of reservation phase prior to the transmission of
the messages. 

We believe that our approach is a significant step forward towards
the efficient implementation of unconditionally untraceable communication
systems.

We see possible applications in many fields, like low-latency untraceable
communication and secret shuffling. The main problem that remains
in practice is the secure agreement on secret keys between participants.

\bibliographystyle{plain}
\bibliography{references}

\begin{thebibliography}{10}

\bibitem{PedersenExtended}
Stefan Brands.
\newblock Rapid demonstration of linear relations connected by boolean
  operators.
\newblock In Walter Fumy, editor, {\em Advances in Cryptology - EUROCRYPT 97},
  volume 1233 of {\em Lecture Notes in Computer Science}, pages 318--333.
  Springer Berlin Heidelberg, 1997.

\bibitem{camenisch1997egs}
J.~Camenisch and M.~Stadler.
\newblock {Efficient Group Signature Schemes for Large Groups}.
\newblock {\em LECTURE NOTES IN COMPUTER SCIENCE}, pages 410--424, 1997.

\bibitem{camenisch1997psg}
J.~Camenisch and M.~Stadler.
\newblock {Proof systems for general statements about discrete logarithms}.
\newblock {\em Technical Report TR 260, Institute for Theoretical Computer
  Science, ETH Zurich}, Mar. 1997.

\bibitem{chaum1988dcp}
D.~Chaum.
\newblock {The dining cryptographers problem: Unconditional sender and
  recipient untraceability}.
\newblock {\em Journal of Cryptology}, 1(1):65--75, 1988.

\bibitem{Franck_DC_0924}
Christian Franck.
\newblock Dining cryptographers with 0.924 verifiable collision resolution.
\newblock {\em Annales UMCS, Informatica}, 14(1):49--59, October 2014.

\bibitem{golle2004dcr}
P.~Golle and A.~Juels.
\newblock {Dining Cryptographers Revisited}.
\newblock {\em Advances in cryptology-EUROCRYPT 2004: International Conference
  on the Theory and Applications of Cryptographic Techniques, Interlaken,
  Switzerland, May 2-6, 2004: Proceedings}, 2004.

\bibitem{lamport1982bgp}
L.~Lamport, R.~Shostak, and M.~Pease.
\newblock {The Byzantine Generals Problem}.
\newblock {\em ACM Transactions on Programming Languages and Systems (TOPLAS)},
  4(3):382--401, 1982.

\bibitem{conf/crypto/Pedersen91}
Torben~P. Pedersen.
\newblock Non-interactive and information-theoretic secure verifiable secret
  sharing.
\newblock In Joan Feigenbaum, editor, {\em CRYPTO}, volume 576 of {\em Lecture
  Notes in Computer Science}, pages 129--140. Springer, 1991.

\bibitem{pfitzmann1990dkt}
A.~Pfitzmann.
\newblock {\em {Diensteintegrierende Kommunikationsnetze mit
  teilnehmer{\"u}berpr{\"u}fbarem Datenschutz}}.
\newblock Springer, 1990.

\bibitem{waidner1990usa}
M.~Waidner.
\newblock {Unconditional Sender and Recipient Untraceability in spite of Active
  Attacks}.
\newblock {\em Lecture Notes in Computer Science}, 434:302, 1990.

\bibitem{yu2005sicta}
Yingqun Yu and Georgios~B Giannakis.
\newblock Sicta: a 0.693 contention tree algorithm using successive
  interference cancellation.
\newblock In {\em INFOCOM 2005. 24th Annual Joint Conference of the IEEE
  Computer and Communications Societies. Proceedings IEEE}, volume~3, pages
  1908--1916. IEEE, 2005.

\end{thebibliography}

\appendix

\section{Proofs\label{appenix:Proofs}}
\begin{lemma}
\label{lem:Pedersen1-1}Given randomly chosen generators $g,h$ of
a group in which the discrete log problem is assumed to hold, a poly-time
participant can only find $(a,b),(a',b')$, such that when $g^{a}h^{b}=g^{a'}h^{b'}$,
it must hold that $a=a'$.\end{lemma}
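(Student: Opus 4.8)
The plan is to prove the contrapositive: if a poly-time participant could produce two openings $(a,b)$ and $(a',b')$ of the same group element with $a \ne a'$, then that participant could be turned into an algorithm computing $\log_g h$, contradicting the assumed hardness of the discrete logarithm problem for the randomly chosen generators $g,h$. This is the standard computational-binding argument for Pedersen commitments, rephrased so that the conclusion is ``$a=a'$ is forced'' rather than ``$(a,b)=(a',b')$ is forced''.

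First I would set up the reduction. On a discrete-log challenge — a pair $(g,h)$ with $h$ a uniformly chosen generator of $G$ — run the hypothetical participant to obtain $(a,b),(a',b')\in\mathbb{Z}_q\times\mathbb{Z}_q$ with $g^{a}h^{b}=g^{a'}h^{b'}$ and $a\ne a'$. Rearranging yields $g^{a-a'}=h^{b'-b}$, an identity in $G$, so all exponent arithmetic below is taken modulo the group order $q$ (which, as is standard for the Pedersen parameters, we take to be prime).

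Next I would do the case analysis on $b$ versus $b'$. If $b=b'$, then $g^{a-a'}=1$, hence $a\equiv a'\pmod q$; since $a,a'\in\mathbb{Z}_q$ this gives $a=a'$, contradicting our assumption. Therefore $b\ne b'$, so $b'-b$ is invertible modulo $q$, and we get $h=g^{(a-a')(b'-b)^{-1}\bmod q}$. The reduction outputs the value $(a-a')(b'-b)^{-1}\bmod q$, which is exactly $\log_g h$. Every step (running the participant, a subtraction, a modular inverse) is polynomial time, so a participant able to find such openings would solve the discrete logarithm problem, contradicting the assumption. Hence any $(a,b),(a',b')$ the participant can find with $g^{a}h^{b}=g^{a'}h^{b'}$ must satisfy $a=a'$.

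I do not expect a real obstacle here; the only points requiring care are bookkeeping ones: being explicit that exponents live in $\mathbb{Z}_q$, noting that primality of $q$ is what makes $b'-b$ invertible and what turns $a\equiv a'\pmod q$ into $a=a'$, and observing that the randomness of $h$ as a generator is precisely what makes the extracted exponent a genuine solution to a discrete-log instance (when $q$ is prime, a random generator is just a random non-identity element, so this matches the stated assumption).
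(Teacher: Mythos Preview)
Your argument is correct and is essentially identical to the paper's proof: both prove the contrapositive by showing that $g^{a}h^{b}=g^{a'}h^{b'}$ with $a\ne a'$ lets one solve the discrete-log relation between $g$ and $h$ (the paper writes $\log_h g=(b'-b)/(a-a')$, you write the inverse $\log_g h=(a-a')(b'-b)^{-1}$). Your version is slightly more careful in explicitly handling the $b=b'$ case and the role of $q$ being prime, but the substance is the same.
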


\begin{proof}
If a poly-time participant can find $(a,b),(a',b')$ such that $g^{a}h^{b}=g^{a'}h^{b'}$
with $a\ne a'$, then he can also compute the discrete logarithm $\log_{h}g$
with 
\[
\log_{h}g=(b'-b)/(a-a').
\]
As this is impossible by assumption, the statement follows.\hfill{}$\square$\end{proof}

\begin{theorem}
If a poly-time participant $P_{i}$ generates the tuple $(O_{i},c_{i})$
and $P_{i}$ knows $\alpha$ such that $c_{i}=g^{O_{i}}h^{\alpha}$,
then we have $O_{i}=K_{i}$.\end{theorem}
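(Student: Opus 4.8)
The plan is to combine the aggregation identity $c_i = \prod_{j=1}^n c_{ij}$ with the binding property of Pedersen commitments encapsulated in Lemma~\ref{lem:Pedersen1-1}. Recall that by construction $c_{ij} = g^{K_{ij}}h^{r_{ij}}$, so by the homomorphic property
\[
c_i = \prod_{j=1}^n c_{ij} = g^{\sum_{j=1}^n K_{ij}} h^{\sum_{j=1}^n r_{ij}} = g^{K_i} h^{\rho_i},
\]
where we write $\rho_i := \sum_{j=1}^n r_{ij}$. Thus $P_i$ always has one way of opening $c_i$, namely as $(K_i,\rho_i)$.

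The hypothesis of the theorem says that $P_i$, being poly-time, also knows $\alpha$ with $c_i = g^{O_i} h^{\alpha}$. So now we have two openings of the same group element $c_i$: the pair $(O_i,\alpha)$ and the pair $(K_i,\rho_i)$. First I would invoke Lemma~\ref{lem:Pedersen1-1} with $(a,b) = (O_i,\alpha)$ and $(a',b') = (K_i,\rho_i)$: since $g^{O_i}h^{\alpha} = g^{K_i}h^{\rho_i}$ and $P_i$ is a poly-time participant, the lemma forces $O_i = K_i$, which is exactly the conclusion. One should be slightly careful to note that the lemma applies because $P_i$ actually \emph{knows} (has computed) both pairs: $(O_i,\alpha)$ by the theorem's hypothesis, and $(K_i,\rho_i)$ because $P_i$ knows all the $K_{ij}$ and $r_{ij}$ it agreed upon in the setup phase, hence can compute their sums.

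The main obstacle — really the only subtle point — is justifying that $P_i$ genuinely possesses the second opening $(K_i,\rho_i)$ as an \emph{explicitly known} witness, so that exhibiting two distinct openings would constitute a poly-time break of the discrete log assumption and is therefore impossible. This rests on the setup described in Section~3: $P_i$ holds each $K_{ij}$ and each $r_{ij}$, and $c_i$ is defined as $\prod_j c_{ij}$, so the computation of $(K_i, \rho_i)$ is immediate. Once that is in place, the argument is just a one-line application of Lemma~\ref{lem:Pedersen1-1}. I would therefore structure the write-up as: (1) expand $c_i$ via the homomorphic property to get the opening $(K_i,\rho_i)$; (2) observe the hypothesis gives a second opening $(O_i,\alpha)$; (3) apply Lemma~\ref{lem:Pedersen1-1} to conclude $O_i = K_i$.
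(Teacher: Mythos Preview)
Your proposal is correct and follows essentially the same approach as the paper: expand $c_i$ as $g^{K_i}h^{r_i}$ by construction, compare with the hypothesised opening $g^{O_i}h^{\alpha}$, and invoke Lemma~\ref{lem:Pedersen1-1} to force $O_i = K_i$. Your write-up is more explicit than the paper's (which compresses steps (1)--(3) into two sentences), and your care in noting that $P_i$ genuinely knows both openings is a worthwhile clarification.
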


\begin{proof}
By definition, we have $c_{i}=g^{K_{i}}h^{r_{i}}$. If poly-time participant
$P_{i}$ knows $O_{i}$ and $\alpha$, such that $c_{i}=g^{O_{i}}h^{\alpha}$,
it follows from Lemma \ref{lem:Pedersen1-1} that $O_{i}=K_{i}$.\hfill{}$\square$
\begin{theorem}
If a poly-time participant $P_{i}$ generates the tuples $(O_{i}^{(1)},c_{i}^{(1)})$
and $(O_{i}^{(2)},c_{i}^{(2)})$, and $P_{i}$ knows $\alpha$ such
that $c_{i}^{(1)}(c_{i}^{(2)})^{-1}=g^{O_{i}^{(1)}-O_{i}^{(2)}}h^{\alpha}$,
then $O_{i}^{(1)}$ and $O_{i}^{(2)}$ encode the same message. \end{theorem}

\begin{proof}
By definition, we have $O_{i}^{(2)}=K_{i}^{(2)}+M_{a}$ and $O_{i}^{(1)}=K_{i}^{(1)}+M_{b}$,
where $M_{a},M_{b}\in\mathbb{Z}_{q}$. Further, we have 
\begin{eqnarray*}
c_{i}^{(1)}(c_{i}^{(2)})^{-1} & = & g^{O_{i}^{(1)}-O_{i}^{(2)}}h^{\alpha}\\
g^{K_{i}^{(1)}-K_{i}^{(2)}}h^{r_{i}^{(1)}-r_{i}^{(2)}} & = & g^{O_{i}^{(1)}-O_{i}^{(2)}}h^{\alpha}.
\end{eqnarray*}
 According to Lemma \ref{lem:Pedersen1-1} it follows that $K_{i}^{(1)}-K_{i}^{(2)}=O_{i}^{(1)}-O_{i}^{(2)}$
and thus 
\[
M_{a}=M_{b},
\]
which is the statement.\hfill{}$\square$
\end{proof}

\end{proof}

\begin{theorem}
If a poly-time participant $P_{i}$ generates  $(O_{i}^{(1)},c_{i}^{(1)}),...,(O_{i}^{(l)},c_{i}^{(l)})$,
and $P_{i}$ knows $\alpha$ such that
\begin{equation}
\bigwedge_{k=2}^{l}\left(\left(c_{i}^{(1)}\prod_{j=2}^{k}(c_{i}^{(j)})^{-1}=g^{O_{i}^{(1)}-\sum_{j=2}^{k}O_{i}^{(j)}}h^{\alpha}\right)\vee\left(c=g^{O_{i}^{(k)}}h^{\alpha}\right)\right)\label{eq:theorem-1}
\end{equation}
then at most one ciphertext of $O_{i}^{(2)},...,O_{i}^{(l)}$ encodes
the same message as $O_{i}^{(1)}$, while the other ciphertexts of
$O_{i}^{(2)},...,O_{i}^{(l)}$ encode no message.\end{theorem}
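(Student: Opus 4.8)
The plan is to argue by induction on $k$, peeling off one conjunct of (\ref{eq:theorem-1}) at a time and using Lemma~\ref{lem:Pedersen1-1} together with the homomorphic structure of the commitments, exactly as in the proofs of the two preceding theorems. First I would fix the meaning of ``encodes a message'': by construction each ciphertext is of the form $O_i^{(k)} = K_i^{(k)} + M^{(k)}$ where $M^{(k)}\in\mathbb{Z}_q$ is $0$ when no message is sent, and $c_i^{(k)} = g^{K_i^{(k)}}h^{r_i^{(k)}}$ is the honest commitment to $K_i^{(k)}$ (this is the content of Theorem~\ref{thm:1} applied round by round, modulo the binding property). The goal is then to show that at most one of $M^{(2)},\dots,M^{(l)}$ equals $M^{(1)}$ and the rest are $0$.

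For the induction, I would consider the $k$-th conjunct
\[
\left(c_i^{(1)}\prod_{j=2}^{k}(c_i^{(j)})^{-1}=g^{O_i^{(1)}-\sum_{j=2}^{k}O_i^{(j)}}h^{\alpha}\right)\vee\left(c_i^{(k)}=g^{O_i^{(k)}}h^{\alpha}\right).
\]
In the right-hand disjunct, substituting $c_i^{(k)}=g^{K_i^{(k)}}h^{r_i^{(k)}}$ and $O_i^{(k)}=K_i^{(k)}+M^{(k)}$ and invoking Lemma~\ref{lem:Pedersen1-1} forces $K_i^{(k)} = O_i^{(k)}$, hence $M^{(k)}=0$: round $k$ carries no message. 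In the left-hand disjunct, the same substitution on both sides, using the homomorphic identity $c_i^{(1)}\prod_{j=2}^{k}(c_i^{(j)})^{-1}=g^{K_i^{(1)}-\sum_{j=2}^{k}K_i^{(j)}}h^{r_i^{(1)}-\sum_{j=2}^{k}r_i^{(j)}}$, and Lemma~\ref{lem:Pedersen1-1} forces $K_i^{(1)}-\sum_{j=2}^{k}K_i^{(j)} = O_i^{(1)}-\sum_{j=2}^{k}O_i^{(j)}$, i.e. $\sum_{j=2}^{k} M^{(j)} = M^{(1)}$ (here I would use that $M^{(1)}$ is the same fixed value throughout, since $O_i^{(1)}$ and $c_i^{(1)}$ are fixed). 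So after reading off conjunct $k$ we are always in one of two situations: either $M^{(k)}=0$, or the partial sum $\sum_{j=2}^{k}M^{(j)}$ equals $M^{(1)}$.

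Combining the conjuncts: let $S$ be the set of indices $k\in\{2,\dots,l\}$ for which the left disjunct holds; for every $k\notin S$ we have $M^{(k)}=0$. For two consecutive indices $k<k'$ both in $S$ with no index of $S$ strictly between them, subtracting the two partial-sum equalities gives $\sum_{j=k+1}^{k'}M^{(j)}=0$, and since all those intermediate indices are outside $S$ they already contribute $0$, leaving $M^{(k')}=0$; similarly the smallest element $k_0$ of $S$ satisfies $\sum_{j=2}^{k_0}M^{(j)}=M^{(1)}$ with all earlier terms zero, so $M^{(k_0)}=M^{(1)}$. Hence at most one round — namely $k_0$, if $S\neq\emptyset$ — can encode the message $M^{(1)}$, and every other round encodes $0$. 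If $S=\emptyset$ all of $M^{(2)},\dots,M^{(l)}$ are $0$, which is also consistent with the claim.

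The main obstacle I anticipate is not the algebra, which is a routine iteration of Lemma~\ref{lem:Pedersen1-1}, but being careful about the bookkeeping of which $M^{(j)}$ are already known to vanish when I subtract two partial-sum equations — i.e. making the ``at most one'' conclusion airtight rather than merely plausible. I would also want to state explicitly that a single witness $\alpha$ serving all $l-1$ conjuncts is what ties the rounds together (otherwise a cheater could use inconsistent randomness across rounds); this is exactly why the statement quantifies $\alpha$ outside the big conjunction, and I would flag that the honest prover's $\alpha$ is $r_i^{(1)}-\sum_j r_i^{(j)}$ on the left branch and $r_i^{(k)}$ on the right branch, so the soundness direction (the one we actually prove) only needs that \emph{some} such $\alpha$ exists.
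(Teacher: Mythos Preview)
Your proposal is correct and follows essentially the same approach as the paper: apply Lemma~\ref{lem:Pedersen1-1} to each conjunct to obtain, for every $k$, the dichotomy ``$M^{(k)}=0$'' or ``$\sum_{j=2}^{k}M^{(j)}=M^{(1)}$'', and then combine these to conclude that at most one $M^{(k)}$ is nonzero. Your bookkeeping via the index set $S$ and subtraction of consecutive partial-sum equalities is just a repackaging of the paper's forward induction that tracks whether the running sum is $0$ or $M$; the underlying logic is identical.
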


\begin{proof}
With $c_{i}:=g^{K_{i}}h^{r_{i}}$ and Lemma \ref{lem:Pedersen1-1}
it follows that when (\ref{eq:theorem-1}) holds, we have
\[
\bigwedge_{k=2}^{l}\left(\left(O_{i}^{(1)}-K_{i}^{(1)}=\sum_{j=2}^{k}O_{i}^{(j)}-K_{i}^{(j)}\right)\vee(O_{i}^{(k)}=K_{i}^{(k)})\right).
\]

Assume the cipertext $O_{1}$ encodes the message $M$, so that $O_{1}=K_{1}+M$
(with possibly $M=0$). For $k=2$, we can then have either $O_{k}=K_{k}+M$
or $O_{k}=K_{k}$. For $k>2$ and $\sum_{j=2}^{k-1}O_{j}-K_{j}=0$,
we can have either $O_{k}=K_{k}+M$ or $O_{k}=K_{k}$. For $k>2$
and $\sum_{j=2}^{k-1}O_{j}-K_{j}=M$, we must have $O_{k}=K_{k}$.
That is, for increasing $k$, as long as $O_{2},...,O_{k-1}$ contains
no message, we can have either $O_{k}=K_{k}+M$ or $O_{k}=K_{k}$.
Once one ciphertext of $O_{2},...,O_{k-1}$ contains the message $M$,
we must have $O_{k}=K_{k}$. Thus, at most one ciphertext of $O_{2},...,O_{l}$
may encode the message $M$ encoded in $O_{1}$, while the other ones
contain no message, which is the statement.\hfill{}$\square$\end{proof}

\end{document}